\begin{document}
\mainmatter              
\title{The complexity of the Timetable-Based Railway Network Design Problem}
\titlerunning{The complexity of the Timetable-Based Railway Network Design Problem}  
%
\author{Nadine Friesen\inst{1} \and Tim Sander\inst{2} \and Christina Büsing \inst{3}\and Karl Nachtigall\inst{2} \and Nils Nießen \inst{1}}
\authorrunning{Nadine Friesen, Tim Sander et al.} 
%
\tocauthor{Nadine Friesen, Tim Sander, Karl Nachtigall, Nils Nießen}
\institute{Institute of Transport Science, RWTH Aachen, Aachen, Germany\\
\email{friesen@via.rwth-aachen.de},\\ WWW home page:
\texttt{https://www.via.rwth-aachen.de/}
\and
 Chair of Traffic Flow Sciences, TU Dresden,
Dresden, Germany
\and
Lehr- und Forschungsgebiet Kombinatorische Optimierung, RWTH Aachen
Aachen, Germany}

\maketitle              

\begin{abstract}
	Because of the long planning periods and their long life cycle, railway infrastructure has to be outlined long ahead. At the present, the infrastructure is designed while only little about the intended operation is known. Hence, the timetable and the operation are adjusted to the infrastructure. Since space, time and money for extension measures of railway infrastructure are limited, each modification has to be done carefully and long lasting and should be appropriate for the future unknown demand.

To take this into account, we present the robust network design problem for railway infrastructure under capacity constraints and uncertain timetables. Here, we plan the required expansion measures for an uncertain long-term timetable. We show that this problem is NP-hard even when restricted to bipartite graphs and very simple timetables and present easier solvable special cases.

This problem corresponds to the fixed-charge network design problem where the expansion costs are minimized such that the timetable is conductible. We model this problem by an integer linear program using time expanded networks.

To incorporate the uncertainty of the future timetable, we use a scenario-based approach. We define scenarios with individual departure and arrival times and optional trains. The network is then optimized such that a given percentage of the scenarios can be operated while minimizing the expansion costs and potential penalty costs for not scheduled optional trains. 

\keywords{Network Design, Timetabling, Robust Optimization, Railway Planning, Railway Network Design, Strategic Timetabling}
\end{abstract}
\section{Introduction}
Traditionally, the planning process for public transportation and railways is executed sequentially: infrastructure planning, line planning and timetable planning. Network design is included in the infrastructure while timetabling is usually executed later in the planning horizon. Because of that, the infrastructure is often assumed to be fixed during the tactical timetabling. This leads to strict constraints for the timetable and reduces the options to adjust the timetable to the transportation demand which possibly changes between the infrastructure planning and the actual operation. Therefore more and more railway companies in Europe shift from this traditional approach to a timetable-based one by including a long-term timetable into infrastructure planning. This long-term timetable is then used as input for further planning steps, including network design. This approach is used in several western European countries, including Switzerland, Germany and the Netherlands. However, it is rarely covered in the scientific literature. The ideas and the conception of strategic timetables in Germany have been described by Weigand et Heppe in \cite{heppe2019spurplangestaltung}. Only recently, Polinder et al proposed a mathematical program for strategic timetabling in \cite{polinder2021timetabling}. 

Network design problems describe the decision which nodes or lines of a network should be expanded or build to meet a given demand. They occur in many different contexts for example in transportation, communication or electricity. The classic network design problem is described in \cite{leblanc1975algorithm} by Leblanc. The problem is modelled as a nonlinear mixed-integer problem. An overview over models and algorithms for transportation network design is given in \cite{magnanti1984network}. A more recent review by Chen et al. can be found in \cite{chen2011transport}. The fixed charge capacited network design problem with multi-commodity flow is closely related to the railway network design problem. If an arc is used, a fixed charge is applied, independently of the amount of flow that is transported over this arc. A survey of this problem is given in \cite{gendron1999multicommodity}. They also provide possible relaxations of the problem. The capacited network design problem for a multi-commodity flow describes a network design problem where the capacity of each arc is limited and the flow starts from different sources and should arrive at different sinks. In \cite{costa2009benders} this problem is described and strategies to solve this problem with Benders decomposition are analyzed. 

In the context of network design for railway infrastructure in the traditional planning approach, vague demand prognoses are assumed. Based on these, the necessary capacity expansions and new constructions are determined. The network design problem in railways without a timetable or temporal constraints has been studied for example in \cite{sponemann2013network} where Sp{ö}nemann provides a MIP formulation for the problem as well as some computational results. In \cite{kuby2001railway} Kuby et al. give an example of the railway network design where extensions of the network are determined by using a mixed-integer linear program. In a second step, they define different stages of the extensions through a heuristic backwards time sequencing procedure.

Most network design problems do not consider a temporal aspect and assume a static demand that does not vary over time. This can be assumed if a worst-case demand over all time steps for the network can be given. However, if such a worst-case cannot be determined, the timetable needs to be included in the network. As the problem becomes computationally intensive if we additionally include a temporal dimension, there exist only very few approaches that consider a temporal aspect. In \cite{guihaire2008transit} Guihaire et al. give a survey over transit network design. They include the planning of the transit routes network, the frequencies and departure times which covers problem related to both network design and timetabling. Zhao et al. combine in \cite{zhao2008optimization}  route network design, vehicle headway and timetable assignment and provide a metaheuristic search scheme consisting of simulated annealing, tabu and greedy search methods to solve the problem.

Timetables in the railway network design problem have been studied in \cite{schobel2012timetable}. Schöbel at al. provide a formalization of the network design problem under timetable constraints as well as sketch some possible algorithms and heuristics. No implementations or computational results are given. In \cite{schobel2013optimization}, the approach is extend with some considerations to reduce the computation time of the problem. In \cite{grujivcic2015variable}, a neighborhood search algorithm is given where as in \cite{grujivcic2017metaheuristic} a meta-heuristic approach is developed.

Due to the long planning horizons for railway infrastructure, the strategic timetable is subject to uncertainty. One method to handle this uncertainty is robust optimization. In the context of railways, robust optimization has been extensively studied. A survey about robust optimization in railway planning can be found in \cite{lusby2018survey}. However, most research considering robustness in railways focuses on robust timetables and not on robust infrastructure. Results about robust infrastructure in the case of failure which occur in the operation can be found in \cite{garcia2013grasp}.

Robustness in network design problems can be found in different contexts: Ukkusuri et al. \cite{ukkusuri2007robust} consider the network design problem under demand uncertainties for transportation networks. The uncertainty in the demand is modelled through random variables with known distributions for the entries of the origin-destination-matrices. Pishvaee et al. (\cite{pishvaee2012robust}) describe a probabilistic programming approach for the supply network design problem under uncertainties. In \cite{chen2011transport} Chen and al. present a bi-objective-reliable network design problem model that optimizes the reliability of the capacity and the travel time under demand uncertainty. A simulation-based multi-objective genetic algorithm solution procedure is developed.
In \cite{mudchanatongsuk2008robust} Mudchanatongsuk et al. present a robust optimization-based formulation for the network design problem under transportation cost and demand uncertainty. An approximation to this is shown to be done efficiently for a network with single origin and destination per commodity and general uncertainty in transportation costs and demand that are independent of each other. For a network with path constraints, an efficient column generation procedure to solve the linear programming relaxation is given.
These approaches do not take a temporal aspect into account.

Up to the authors' knowledge, little research considering the network design problem under both uncertain demand and timetable constraints has been conducted yet. In \cite{sander2023robust}, we integrate the timetable conditions by modelling the arrival and departure times as variables and provide computational results for a scenario-based robust approach. In this paper, we aim to provide another model for the same railway network design problem under timetable constraints. A comparison of both models can be found in \cite{friesen2023modelling} where we describe the theoretical differences between the models and compare their computational results for a test instance of a network around Dresden.

The model presented here is based on a time expanded network. This considers the influence of timetables as well as the influence of uncertainty on the railway network design problem. We propose an optimization model that allows for the expansion cost-optimal calculation of railway networks while the demand is given as a set of trains with a respective starting and end node as well as earliest departure and latest arrival times. We prove that this problem is NP-hard to solve and present some graph classes for which the problem easier to solve. Additionally, we present a formalization of the optimization problem for the deterministic and the robust case.

This paper is structured as follows: First, we define the problem and introduce the notation in section \ref{notation}. In section \ref{model} we describe our mathematical model for the network design problem which we adapt for the robust approach in section \ref{robust}. Then, we provide the proof of the NP-hardness of the problem in \ref{complexity} as well as some easier special cases. Some computational results are given in section \ref{computational}. We conclude this paper in section \ref{conclusion} with some final remarks and a short outlook onto further research.

\section{Problem definition and notation}\label{notation}

In this paper, we expand the classic network design problem by incorporating timetable constraints and adapting railway specific capacity measures. We aim to find an expansion cost minimal railway network to operate a given long-term timetable on such that the capacity constraints are respected.


 The problem is closely related the well-known fixed-charge network design problem as described in \cite{thomadsen2007generalized}. Instead of costs for each unit of flow as in the more common minimum cost flow problem, we consider costs that occur if an arc is used at all, independently of the amount of flow on this arc. Here, these fixed charges correspond to the expansion costs for arcs. If the capacity of an arc is not sufficient, the arc is expanded and the expansion costs incur. We consider a setting in which it is a binary decision to expand an arc, so the arc is either expanded fully or not at all.

We consider a railway network consisting of stations represented by nodes and lines represented by arcs $(i,j)$. Each arc has a travelling time $tt_{ij}$ which is the number of time steps, a vehicle needs to pass that arc. 

Since the capacity of railway lines is limited, it is necessary to include a capacity measure to estimate the number of tracks needed for the operation of the input timetable. Railway line capacity depends on several different factors, most notably the type and amount of trains running on a line, the number of tracks and the train control system in place. Here, we assume the capacity to be given for each line. The capacity of each arc is given by $c_{ij}$ which is the maximum number of vehicles that can start in node $i$ to node $j$ in a predefined time interval $\widetilde{t}$. The capacity of a line can be extended by $\widetilde{c_{ij}}$ for the cost of $k_{ij}$. The lines that can be built, expanded or already exist are predefined as well as the corresponding costs. This is due to the fact that not all lines could be built due to financial and spacial restrictions. 

To integrate the temporal aspect we use a time-expanded network. That is, for each node $i$ we have as many copies $i_t$ as we have time steps $t\in T$ Where $T$ is the set of all time steps. The edges of the time expanded network rely on the edges of the original network. That is for each edge $ij$ in the original network we define edges $i_tj_{t'}$ in the time-expanded network where $t'-t$ is the travelling time for edge $ij$. To describe these edges we use a adjacency matrix $A$ where $a_{i_tj_{t'}}=1$ if the edge $ij$ exists and the travelling time between $i$ and $j$ is $t'-t$ and $t, t'\in T$. As we only model a fixed travel time for all vehicles in this paper, we omit $t'$ for simplicity in the following.

Further, we have a long-term timetable as an input which is described through a set of trains $V$ where each train $v\in V$ has a departure and an arrival node $d_v$ and $a_v$ with an earliest departure time $t_{d_v}$ and a latest arrival $t_{a_v}$ time at these nodes. These trains then have to be routed such that the, potentially expanded, capacity is not exceeded. 

For each edge and each pair of vehicles, there exists a minimum headway time $M_{v_1v_2ij}$ which states the minimal time train $v_2$ can leave onto line $ij$ after $v_1$ left station $i$. 

We define binary variables $b_{ij}$ that state if the edge $ij$ is expanded. The binary variables $x_{v,i,j,t}$ describe if the edge $ij\in E$ is used by vehicle $v\in V$ which leaves node $i\in N$ at time point $t\in T$. 

Additionally, we introduce VIA-nodes which are stations a train has to pass. For VIA-Nodes, we define a pair $(v,n)$ where train $v\in V$ needs to pass node $n\in N$ that is $\exists i\in N,t\in T: x_{i,n,v,t}=1$. Furthermore, connections are specified through a station and for the connection a pair of trains. In these stations, both trains have to stop and allow the transition of passengers from the first train to the second one.
Connections are given as a tuple $(n,v_1.v_2)$ where train $v_1$ should arrive at node $n$ before train $v_2$ to ensure a connection.

\section{Mixed Integer Linear Programming Formulation}\label{model}
In this section, we want to formalize the network design problem for railways under timetable constraints. We model this optimization problem through a time-expanded network.
The objective of the network design problem for railway infrastructure is to minimize the expansion costs of a railway network while the constrains ensure that the timetable is conductible. 

The optimization model can informally be described as follows:

\begin{align*}
& \underset{b_{ij}\text{ } binary, ij \in E}{\text{minimize}}
&\sum_{ij\in E} ExpansionCost_{i,j}\cdot b_{ij} \\
& &\\
&\text{such that}
&\text{(Expanded) capacity is not exceeded}  \\
& &\text{Earliest departure times are respected}\\
& &\text{Latest arrival times are respected}\\
& &\text{Minimum headway times are respected}\\
& &\text{Flows are conserved}\\
& & \text{Connections are respected}\\
& & \text{VIA-Nodes are respected}\\
\end{align*}
With the notations from Section \ref{notation} we obtain the following mathematical formalization.
\paragraph{Objective function}
\begin{align} \label{obj} 
\min \sum_{ij\in E} k_{i,j}\cdot b_{i,j} && \\ \nonumber
\end{align}
The objective function is to minimize the expansion costs. For that, the variable $b_{ij}$ for each arc ${ij}$ becomes true if the capacity of this arc is expanded. The costs for the expansion of that track $ij$ are given by $k_{ij}$. To obtain the costs for the whole network, we build the sum of these costs over all arcs.

\paragraph{Capacity}
\begin{align}
\label{cap}   
  \nonumber   \sum_{v\in V, t\in[t_0,t_0+\widetilde{t}]} x_{i,j,t,v}\leq c_{ij}+b_{ij}\cdot \widetilde{c_{ij}}& &\forall ij\in E, t_0\in T\\
  \end{align}

We include the capacity $c_{ij}$ as the number of trains on a line which cannot be exceeded per a given time interval $\widetilde{t}$. The binary variables $x_{i,j,v,t}$ are true if train $v$ leaves node $i$ at time $t$ to get to node $j$. Hence, the sum of these $x_{i,j,v,t}$ needs to be less or equal to the capacity $c_{ij}$ for a predefined time interval $\widetilde{t}\in\{1,\max\{T\}\}$. This needs to be respected for each time interval $[t_0,t_0+\widetilde{t}]$ for all time points $t_0\in \{0,\dots,\max\{T\}-\widetilde{t}\}$. This capacity $c_{ij}$ can be extended by $\widetilde{c_{ij}}$ to allow more trains per time window. If the line is extended, $b_{ij}$ has to be true and therefore the costs of the extension have to be respected in the objective function.

\paragraph{Departure and arrival}
\begin{align}
  \label{dep}
    \nonumber   \sum_{t<t_{d_v}} x_{d_v,j,t,v}=0 & & \forall v \in V, \forall j \in N\\
    \nonumber   \sum_{t\geq t_{d_v}} a_{d_{v_t},j_{t'}}\cdot x_{d_v,j,t,v}\geq 1 & & \forall v \in V, \forall j \in N\\
    \nonumber    \sum_{t>t_{a_v}} x_{i,a_v,t,v}=0 & &\forall v \in V, \forall i \in N\\
   \nonumber   \sum_{t\leq t_{a_v}} a_{i_{t'},a_{v_t}}\cdot x_{i,a_v,t-tt(i,a_v),v}\geq 1 & & \forall v \in V, \forall i \in N\\
      \end{align}
These constraints ensure that no vehicle $v$ leaves before their earliest departure time $t_{d_v}$ which is no $x_{i,j,t, v}$ is $1$ for $t<t_{d_v}$. Furthermore, they make sure that the departure from the starting node $d_v$ only occurs along an existing line. This is not ensured by the flow constraints as the starting and end nodes are not included in the flow constraints we use in this model. These constraints are implemented for the arrival node $a_v$ and latest arrival time $t_{a_v}$ accordingly.

\paragraph{Minimum Headway time}
\begin{align}
      \label{mht} 
  \nonumber  x_{i,j,t_1,v_1}\cdot (M_{ijv_1v_2}-(t_2-t_1))-(1-x_{i,j,t_2,v_2})\cdot max\{0;M_{ijv_1v_2}-(t_2-t_1)\}\leq 0 \\ \forall ij\in E,t_1<t_2\in T,v_1\neq v_2\in V
\end{align}
Furthermore, we implement minimum headway times. These are defined as the amount of time that has to pass between the departure of two trains following each other on the same track. They depend on the speed and the acceleration of the trains, the blocking distance and the distance to the next station which permits a change of the train sequence. Therefore, the minimum headway time is individually determined for each line $ij$ and each pair of trains $(v_1,v_2)$. 

The most intuitive way to implement the minimum headway time is to multiply the difference between the departure times and the minimum headway time with the binary variables $x_{i,j,t_1,v_1}$ and $x_{i,j,t_2,v_2}$ for all $ij\in E,t_1<t_2\in T,v_1\neq v_2\in V$ which indicate that the trains are driving on that line at the indicated time. By that, we obtain the constraint \begin{equation}
     x_{i,j,t_1,v_1}\cdot (M_{i,j,v_1,v_2}-(t_2-t_1))\cdot x_{i,j,t_2,v_2}\leq 0
\end{equation}
To avoid this quadratic constraint, we use the linear constraint given in \ref{mht}.
If both trains $v_1$ and $v_2$ use the line $(i,j)$, we obtain $(M_{ijv_1v_2}-(t_2-t_1))\leq 0$ which is exactly the constraint we are aiming for. If $x_{i,j,v_1,t_1}=1$ and $x_{i,j,t_2,v_2}=0$, we obtain $(M_{ijv_1v_2}-(t_2-t_1))-max\{0;(M_{ijv_1v_2}-(t_2-t_1))\}\leq 0$ which is true trivially.
If $x_{i,j,v_1,t_1}=0$, equation \ref{mht} becomes $(1-x_{i,j,t_2,v_2})\cdot -max\{0;M_{ijv_1v_2}-(t_2-t_1)\}\leq 0$ which is $0\leq 0$ for $x_{i,j,t_2,v_2}=1$ and $-max\{0;M_{ijv_1v_2}-(t_2-t_1)\}\leq 0$ for $x_{i,j,t_2,v_2}=0$. Both are easy to verify.
\paragraph{Flow constraints}
\begin{align}  \label{flow}   
  \nonumber   \sum_{i\in N,t\in T,j\neq a_v} a_{i_t,j_{t'}}\cdot x_{i,j,t,v}-\sum_{i\in N,t''\in T,j\neq d_v} a_{j_{t'},i_{t''}}\cdot x_{j,i,t',v} =0 \\\forall v\in V,j\in N,t'\in T   
\end{align}
The train paths have to be consistent. That is that no trains appear or disappear somewhere else than their origin or destination node. This is made sure through the inclusion of flow conservation constraints. All trains that enter a time-space-node have to leave that node through an existing line except for their arrival node.  
By including the parameter $a_{i_t,j_{t'}}$ which is true if the connection is possible we make sure that only existing lines are used.


  
\paragraph{Connections}
~\newline \noindent Connections are given as a tuple of two trains $v_1$ and $v_2$ and a station $i$ where the connection should occur. By the constraints
\begin{align}\label{connection}
      \sum_{n=1}^{|N|} a_{n_t,j_t'}\cdot x_{n,i,t-tt_{n,i},v_1}- \sum_{n=1}^{|N|} a_{i_t,n_{t'}}\cdot x_{i,n,t,v_2} \geq 0&& \forall t\in T, (v_1,v_2,i)\in Connections
\end{align}
we ensure that $v_1$ arrives before or at the same time as $v_2$. Furthermore, we need to ensure that train $v_2$ leaves station $i$. This is done by
\begin{align}
   \sum_{t\in T}  \sum_{n=1}^{|N|} x_{i,n,t,v_2} \geq 1&& \forall (v_1,v_2,i)\in Connections
\end{align}
This already implies by equation \ref{connection} the arrival of train $v_1$ at the station as elsewise equation (\ref{connection}) could become negative.
\paragraph{VIA-Nodes}
\begin{align}\label{VIA}
     \sum_{t\in T}  \sum_{j=1}^{|N|} x_{i,j,t,v} \geq 1 && \forall (v,i)\in VIA-Nodes
\end{align}
We can define a VIA-Node $i$ for a train $v$ as a tuple $(v,i)\in VIA-Nodes$. This ensures that the vehicle $v$ passes node $i$ at some time point.

\subsection{Robust Extension of the Model} \label{robust}
Since extension measures have a long planning and construction horizon, the long-term time\-table, we use to determine the necessary extensions, is not final. The demand or the traffic policy in the future could differ. To incorporate this uncertainty of the timetable, we developed a robust version of the model.

We model the uncertainty through two different extensions of the deterministic model:
\begin{itemize}
    \item Optional trains and
    \item Different timetables as scenarios. 
\end{itemize}

First, we introduce optional trains, which can but do not have to be routed. If they are not routed, this leads to a penalty which is implemented in the objective function. The objective function is therefore changed to  
\begin{equation}
\min \sum_{ij \in E} k_{i,j}\cdot b_{i,j}-\sum_{v\in V_{opt}} x_{d_v,j,t,v}\cdot k_v
\end{equation}
where $k_v$ is the penalty of the optional train $v$ and $V_{opt}\subseteq V$ is the set of optional trains. The departure and arrival are not necessary for the optional trains. Therefore, the constraints that force the train to leave and arrive are omitted for the optional trains $v\in V_{opt}$. 
\begin{align}
 \label{dep2}
    \text{Arrival and Departure} & \\ 
    \nonumber   \sum_{t<t_{d_v}} x_{d_v,j,t,v}=0 & & \forall v \in V \\
    \nonumber   \sum_{t\geq t_{d_v}} a_{d_{v_t},j_{t'}}\cdot x_{d_v,j,t,v}\geq 1 & & \forall v \in V\backslash V_{opt}\\
    \nonumber    \sum_{t>t_{a_v}} x_{i,a_v,t,v}=0 & &\forall v \in V\end{align}\begin{align}
   \nonumber   \sum_{t\leq t_{a_v}} a_{i_{t'},a_{v_t}}\cdot x_{i,a_v,t-tt(i,a_v),v}\geq 1 & & \forall v \in V\backslash V_{opt}
   \end{align}

   As all $x_{i,j,v,t}$ on the time-space-path of the train equal $x_{d_v,j,v,t}$ for some $t$ due to the flow constraints, the path is completely true or all variables for this train are $0$. Therefore, the capacity constraints and minimum headway time constraints do not change due to the introduction of optional trains.  Connection constraints and VIA-node constraints are not implemented for optional trains.
   
Secondly, we integrated $m$ different scenarios into the optimization. Each scenario has its own set of trains $V_n$ where each train in $V=\bigcup_{n\in\{1,\dots,m\}} V_n$ has to be routed. Therefore the departure and arrival time constraints (\ref{dep}), the flow constraints (\ref{flow}), the connection constraints (\ref{connection}) and VIA-node constraints (\ref{VIA}) stay the same and have to be fulfilled for all trains. The constraints (\ref{mht2}) and (\ref{cap2}) describe interactions between the trains and are used for every $V_n$ independently:

\begin{align}
  \label{cap2}   \text{Capacity}\\ 
  \nonumber\sum_{v\in V_n, t\in(t_0,t_0+\widetilde{t})} x_{i,j,t,v}\leq c_{ij}+b_{ij}\cdot \widetilde{c_{ij}}\\\nonumber\forall ij\in E, t_0\in T,t_0<\max\{T\}-\widetilde{t}, n\in \{1,\dots,m\}\end{align}
  \begin{align}\label{mht2}  \text{Minimum Headway times}\\
  \nonumber  x_{i,j,t_1,v_1}\cdot (M_{ijv_1v_2}-(t_2-t_1))-(1-x_{i,j,t_2,v_2})\cdot max\{0;M_{ijv_1v_2}-(t_2-t_1)\}\leq 0 \\\nonumber\forall ij\in E,t_1<t_2\in T,v_1\neq v_2\in V_n , n\in \{1,\dots,m\}
    \end{align}



With these adaptions, we can model a uncertainty set of timetables in which different scenarios with different set of trains can occur. 
The optimization model can decide if the network can be extended to be able to schedule optional trains. This happens through the comparison of the penalty for not scheduling the train and the needed expansion cost for scheduling it. This uncertainty set allows us to represent a uncertain long-term timetable for operation in a few decades. Based on this, we can decide which infrastructure measure can be taken to obtain a robust infrastructure.

\subsection{Computational Results}\label{computational}
The following test instances are run on an Intel i7-10700 CPU 2.90GHz machine with 16 GB
RAM using Gurobi 9.1.0 as optimizer. All the instances are solved
to optimality. The runtime is taken as the average out of 4 resolutions of the same instance. The test instances are a small extract of one hour of the German strategic timetable \"Deutschlandtakt\" in the area of Dresden, see \cite{SMAundPartnerAG.2020.ITF-SO}.

In the deterministic case, we obtain the running times in table 1:

\begin{table}[H] \label{det}
    \centering
    \begin{tabular}{|l|l|l|l|l|l|l|l|}
    \hline
        Timesteps & Trains & Nodes & Arcs & Runtime [s] & Constraints & Variables \\ \hline
       60	&20	&30	&100	&0,7	&19875	&1661  \\ \hline
       60	&22	&30&	100&	1,73	&26621&	2313
\\ \hline
        60	&24	&30&	100&	1,81	&32265	&2965
 \\ \hline
      60	&26	&30	&100	&3,77&	44879	&4019
 \\ \hline
     60	&28&	30	&100	&16,5	&56507&	5073
 \\ \hline
       
    \end{tabular}
        \caption{Runtime of the optimization model for the deterministic case}
\end{table}

Already in these small examples, we see the exponentially rising runtime and number of constraints. 

For the case with more scenarios, the runtime heavily depends on the partition of the vehicles onto the scenarios as shown in table 2. If the vehicles are approximately evenly distributed over the scenarios, the runtime decreases compared to the deterministic case. This happens as the constraints for the minimum headway time (which are the majority of the constraints) only occur for vehicles that are in the same scenario. If a higher percentage of the vehicles occur in both scenarios, the number of constraints and the runtime rise.

\begin{table}[H]\label{szenario}
    \centering
    \begin{tabular}{|l|l|l|l|l|l|l|l|l|l|l|}
    \hline
       Trains & in Sc 1 & in Sc 2 &in Sc 3 & in Sc 4& Nodes & Arcs &  Runtime[s]& Constr. & Variables \\ \hline
        28 & 7 & 7 & 7 & 7 & 30 &100 &1,19 & 45502 & 5076 \\ \hline
         28 & 9 & 8 & 7 & 2 & 30 & 100& 1,59 & 45284 & 5076\\ \hline
         28 & 10 & 9 & 7 & 1 & 30 & 100& 1,15 & 45612 & 5076\\ \hline
       28 & 12 & 10 & 4 & 0 & 30 & 100& 0,54 & 45612 & 5076\\ \hline
    \end{tabular}
    \caption{Runtime of the optimization model for four scenarios and 60 timesteps}
\end{table}

\section{Complexity}\label{complexity}
In this section, we are going to analyze the complexity of the network design problem for railways under timetable constraints. We will show, that the problem is already NP-hard for very simple graphs and timetable structures. Nevertheless, we will give some examples of specific graph classes in which the problem is (pseudo-)polynomial solvable. We will focus on given, fixed departure times. Otherwise, we obtain, even if our network consists of only one edge, an interval scheduling problem. This is shown to be NP-hard in \cite{chuzhoy2006approximation} if there exists three or more departure options for any train.

The proof in this section is adapted from \cite{holzhauser2016budget} who studied the budget-constrained minimum cost flow problem. The network design problem for railways we study in here is analogous to the dual problem of the budget-constrained minimum cost flow problem. In this problem, a minimum cost flow is searched for. The costs occur per unit of flow on each arc. Furthermore, a fixed-charge is applied if the capacity of an arc is not sufficient. This fixed charge is budgeted. 

For this proof, we reduce the ExactCoverBy3Sets problem onto the railway network design problem. In this proof, we omit VIA-nodes and connections for simplicity.

\begin{definition}ExactCoverBy3Sets: Given a set $X$ with $3q$ elements and a set $C$ of
3-element subsets $C_i$ of $X$. Does there exist a subset $\widetilde{C}\subset C$ where
each element of $X$ occurs exactly once?
\end{definition}
\begin{theorem}
The railway network design problem is strongly NP-hard to solve even on bipartite graphs and if all trains have the same, fixed departure time.
\end{theorem}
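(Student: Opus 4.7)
The plan is to reduce \textsc{ExactCoverBy3Sets} (X3C), which is strongly NP-complete, to the railway network design problem and arrange the constructed instance so that the underlying graph is bipartite and every train shares a common departure time. This mirrors the dual construction used by Holzhauser for the budget-constrained minimum cost flow problem, adapted to the fixed-charge / network-design direction.

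Given an X3C instance $(X,C)$ with $|X|=3q$ and $C=\{C_1,\dots,C_m\}$, I would build a graph with one source $s$, a ``set node'' $u_i$ for every $C_i$, and an ``element node'' $w_j$ for every $x_j\in X$. For each $C_i$ include an arc $(s,u_i)$ with base capacity $0$, expanded capacity $\widetilde{c}_{su_i}=3$ and cost $k_{su_i}=1$; for every pair with $x_j\in C_i$ include an arc $(u_i,w_j)$ with base capacity $1$ and prohibitively high expansion cost so that it will never be expanded. The bipartition is $\{s\}\cup\{w_j\}_j$ versus $\{u_i\}_i$, which is respected by all constructed arcs. Set travel time $tt=1$ on each arc, minimum headway $M_{\cdot\,\cdot\,\cdot\,\cdot}=0$, and the capacity window $\widetilde t$ large enough to contain the whole horizon $\{0,1,2\}$. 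For each $x_j\in X$ introduce a train from $s$ to $w_j$ with $t_{d_v}=0$ and $t_{a_v}=2$. All these parameters are encoded in unary of size polynomial in the X3C instance, so strong NP-hardness is preserved.

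The two directions of the equivalence should be short. For the forward direction, an exact cover $\widetilde C\subseteq C$ of size $q$ gives a feasible solution of cost $q$ by expanding exactly the arcs $(s,u_i)$ with $C_i\in\widetilde C$ and routing each train $v_j$ along $s\to u_i\to w_j$ where $C_i$ is the unique set in $\widetilde C$ covering $x_j$; the capacity constraint on each expanded $(s,u_i)$ is tight with exactly $3$ trains, and the flow, departure, arrival and headway constraints are trivially satisfied. For the converse, in any feasible solution every train $v_j$ must traverse some arc $(s,u_i)$ with $x_j\in C_i$, contributing one unit of demand. An unexpanded $(s,u_i)$ carries no train, while an expanded one carries at most $3$; hence at least $\lceil 3q/3\rceil=q$ expansions are needed, and a solution of cost $\leq q$ uses exactly $q$ expansions, each saturated to capacity $3$. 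Saturation forces the $q$ chosen sets to partition $X$, yielding an exact cover.

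The main thing to double-check will be that the timing machinery of Section~\ref{model} does not accidentally add slack that breaks the capacity argument: in particular that, with common departure time $0$, travel time $1$, headway $0$ and a single sliding window containing time $0$, the capacity constraint~\eqref{cap} really does count all $3q$ departures against the arcs $(s,u_i)$; and that the arrival/departure constraints~\eqref{dep} together with the flow constraints~\eqref{flow} force each train onto a genuine $s$--$w_j$ path of length exactly two through the bipartite graph, so the counting argument on expanded arcs is airtight. Everything else (VIA-nodes, connections) is excluded by the reduction, as announced in the paragraph preceding the statement.
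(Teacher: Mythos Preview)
Your reduction is correct and follows the same X3C template as the paper: set nodes behind the source with zero base capacity and expandable capacity~$3$, element nodes reached from the set nodes, and the cover read off from which source-to-set arcs are expanded in a budget-$q$ solution. The one structural difference is that the paper adds a common sink $t$, routes all $3q$ trains from $s$ to $t$, and enforces exactness via capacity-$1$ arcs from the element nodes to $t$, whereas you give each train its own destination $w_j$ and let the distinct sinks enforce exactness directly. Your variant saves one layer of nodes and shortens the counting argument; the paper's variant buys the slightly sharper statement that hardness already holds when all trains share not only a common fixed departure time but also a common origin--destination pair. Either construction is bipartite with polynomially bounded numerical data, so strong NP-hardness follows in both cases.
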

\begin{proof} 
We reduce the ExactCoverBy3Sets-problem to our problem. This problem is shown to be strongly NP-hard in \cite{garey1979computers}.

If we are given an instance of ExactCoverBy3Sets, we design a network consisting of a source and a sink vertex, a vertex $v'_i$ for each subset $C_i$ and a vertex $v_j$ for each element $x_j \in X$. We then construct arcs from the source $s$ to each node $v'_i$ and from each node $v_i$ to the sink $t$. Furthermore, we connect each node $v'_i$ with the nodes $v_j$ if $x_j\in C_i$. All of these arcs have a capacity of 0. Let the expandable capacity $\widetilde{c_{s,v'_i}}$ of the arcs between the source and each $v'_i$ be $3$ and the expandable capacity of the arcs between each $v_i$ and $t$ and the expandable capacity of the arcs between $v'_i$ and $v_j$ equal 1. Let all arcs have a travelling time of 1. The expansion cost for the arcs from the source to each $v'_i$ are 3 while all other arcs have expansion costs of 0. Alternatively, this can be modelled by defining no expandable capacity and having already a capacity of 1 in these arcs.

For a small example given by $X=\{x_1,\dots,x_6\}$ and \newline $C=\big\{\{x_1,x_3,x_4\},\{x_1,x_4,x_5\},\{x_2,x_5,x_6\}\big\}$, the resulting network is shown in figure 1.
\newline~\newline
\begin{figure}
    \centering
    \includegraphics[width=0.8\textwidth]{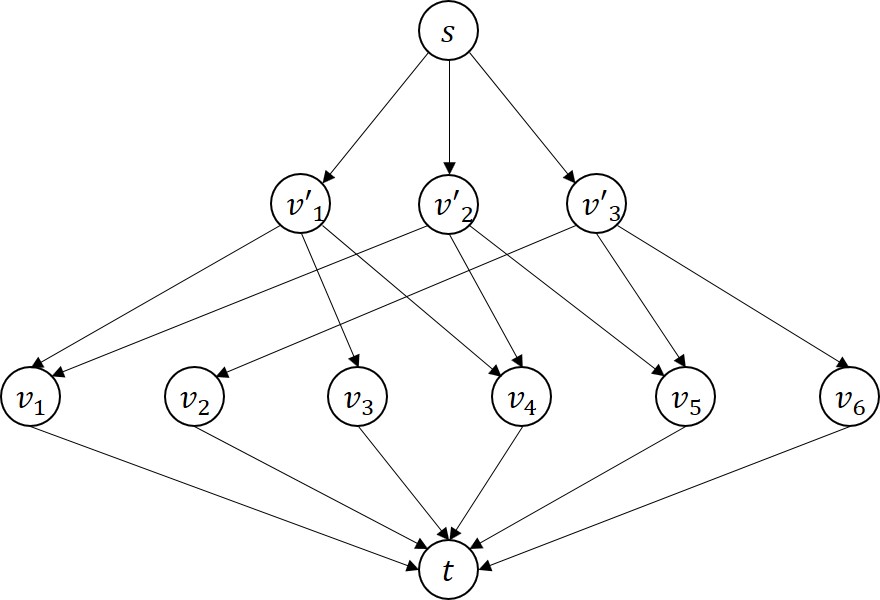}\label{cover}
 \caption{Resulting graph modelling an Exact3Cover as a railway network design problem}
\end{figure}

In this example we can easily verify that the sets corresponding to $v'_1$ and $v_3'$ would lead to an Exact3Cover.

We define a timetable with $X$ trains which all have the same earliest departure time $0$ at their departure node $s$ and latest arrival time $3$ at their arrival node $t$. This results in a given, specific departure time which is the same for all trains (and as the minimal travel time is $3$, the departure has to be at the earliest possible departure time). All trains start at node $s$ and end at node $t$.

We consider the decision problem if it is possible to construct a routing with a maximum cost of $3q$ such that $3q$ trains can be operated. We show that this is true if and only if the given instance of ExactCoverBy3Sets is a yes-instance.

Let there exist a schedule for $3q$ trains with cost $\leq 3q$. In this case at most $q$ lines from the source to the nodes $v'_i$ are expanded and can therefore be used to route the trains. Since the capacity of the arcs between $s$ and $v'_i$ is 3 each and $3q$ trains are scheduled, we obtain that at least $q$ arcs have to be expanded. Therefore, 3 trains have to use each of the expanded arcs between the source and the $v'_i$ since there exists exactly as many options to leave $s$ as there exists trains. Each trains then has to use exactly one of the edges between $v'_i$ and $v_j$. Since only $q$ nodes $v'_i$ are used, all arcs between these $v'_i$ and their corresponding three successors $v_j$ have to be used to route the trains. This leads to a subset $\widetilde{C}=\{C_i\vert (s,v'_i) \text{ is expanded}\}$ with $q$ sets such that all elements of $X$ are covered.

Let us now assume that there exists an ExactCoverBy3Sets given as $\widetilde{C}\subset C$. With the same correspondence as before, we can expand the arcs $(s, v'_i)$ for each $C_i\in \widetilde{C}$. This leads to expansion costs of $3q$. Then three trains can be send through each of these arcs. As there exists an exact cover, these trains can then proceed to the three nodes $v_j$ following $v'_i$ and each $v_j$ is only visited by one train for which the capacity of 1 between $v_j$ and $t$ is sufficient. This leads to a routing for all $3q$ trains with a cost of $3q$. By this, we show that ExactCoverBy3Sets can be reduced polynomially onto the network design problem for railways. If we have a train schedule, we can easily verify if it is feasible. Therefore, the problem is in NP and we have shown that it is strongly NP-hard. 
\end{proof}

As we used an extremely simple timetable and a svery simple costs structure to prove the NP-hardness, a further simplification of the timetable does not seem helpful to obtain easier solvable special cases. Therefore, we focus in the following on specific graph classes to find (pseudo-)polynomial solvable subproblems. 

\subsection{Complexity on Arborescences}
In this subsection, we show that the network design problem for railway is polynomially solvable on arborescences. 
An arborescences is an abstraction of a tree onto digraphs. One vertex is defined as a root and all arcs point away from the root. There exists no cycles. This leads to a unique path between each pair of nodes.

\begin{theorem}
The railway network design problem is polynomially solvable on an arborescence, if all trains leave at a fixed departure time.
\end{theorem}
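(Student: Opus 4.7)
The plan is to exploit the uniqueness of directed paths in an arborescence together with the fixed-departure assumption to collapse the MILP into a collection of independent, trivially solvable decisions, one per arc.

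First I would observe that in an out-arborescence there is at most one directed path from any vertex $u$ to any vertex $w$. Hence for each train $v$ the routing is forced: either $d_v$ is an ancestor of $a_v$ and there is a unique path $P_v$ from $d_v$ to $a_v$, or the instance is infeasible for $v$. Because the departure time $t_{d_v}$ is fixed and each arc has a deterministic travel time $tt_{ij}$, the time at which $v$ traverses every arc of $P_v$ is also determined, so every variable $x_{i,j,t,v}$ is pinned to a known value before optimisation. This can be computed by a single breadth-first traversal of the arborescence from each $d_v$ (or once from the root, storing depths and timestamps). I would then verify in the same pass that $t_{d_v}+\sum_{ij\in P_v} tt_{ij}\le t_{a_v}$ for every $v$, rejecting the instance otherwise, and likewise verify the departure/arrival constraints (\ref{dep}).

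Next I would handle the constraints that are independent of the expansion variables $b_{ij}$. The minimum-headway constraints (\ref{mht}) now involve only the known values of $x_{i,j,t,v}$, so for every arc $ij$ I sort the trains using it by their fixed entry time and check that consecutive pairs satisfy $t_2-t_1 \ge M_{ij v_1 v_2}$. If any pair fails, the instance is infeasible because no rerouting is possible. VIA-nodes and connection constraints (when present) can be checked analogously against the fixed paths and entry times.

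With all $x$-values fixed, the objective function and the capacity constraints (\ref{cap}) decouple across arcs. For each arc $ij$ I compute the load $N_{ij}(t_0):=\sum_{v\in V,\,t\in[t_0,t_0+\widetilde t]} x_{i,j,t,v}$ in every time window, which reduces to a standard sliding-window maximum over $|T|$ values and runs in $O(|T|+|V|)$ per arc. Let $L_{ij}=\max_{t_0} N_{ij}(t_0)$. If $L_{ij}>c_{ij}+\widetilde{c_{ij}}$ the instance is infeasible; otherwise the optimal choice is $b_{ij}=1$ iff $L_{ij}>c_{ij}$, because this choice is forced whenever the base capacity is exceeded and any other choice is dominated since $k_{ij}\ge 0$. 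The optimal objective is then $\sum_{ij:\,L_{ij}>c_{ij}} k_{ij}$, and the total running time is polynomial in $|V|$, $|E|$ and $|T|$.

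I do not expect a genuinely hard step here: the whole argument hinges on the single structural fact that an arborescence admits at most one $u$-to-$w$ path, which together with fixed departure times eliminates every combinatorial choice in the routing. The only bookkeeping care needed is the proper separation of feasibility checks (path existence, arrival deadlines, headways, expandable capacity sufficiency) from the per-arc expansion decision, so that one concludes both polynomial-time solvability and that the greedy per-arc rule is optimal rather than merely feasible.
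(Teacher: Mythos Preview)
Your argument is correct and follows exactly the paper's line: use the uniqueness of directed paths in an arborescence together with fixed departures to pin down all routing variables, then check feasibility and decide the expansion per arc. The paper's proof is a terse one-paragraph version of the same idea, whereas you spell out the feasibility checks and the sliding-window load computation in more algorithmic detail.
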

\begin{proof}
If we are given a set of trains with an origin and a departure, the route for each train is unique and therefore fixed. Hence, we can count the number of vehicles on each arc (either per time step or directly if we set the root vertex as departure for all trains). We can now verify if the routing is feasible and if the (expandable) capacity on each arc is respected. It follows directly which arcs need to be expanded. It can be easily seen that all these steps can be executed in polynomial time in the input size.
\end{proof}

However, as the trains potentially use the same lines, we again obtain an interval scheduling problem if we omit the fixed departure time.

\subsection{Complexity on Series-parallel graphs}
We focus in this section on so called series-parallel graphs, for which we obtain pseudo-polynomial solvability. We neglect here the fact that trains use the capacity only during their respective travelling time on the line as we would again obtain an NP-hard interval scheduling problem even on one arc. (This corresponds to choosing the time interval for the capacity constraints $\widetilde{t}=\max{t}$.)

\begin{definition}
A series-parallel graph is a graph $G=(V,E)$ with a source $s$ and a sink $t$ which can be recursively constructed. Each series-parallel graph is build from two series-parallel graphs through composition as shown in figure 2. The simplest series-parallel graph is defined as one edge between $s$ and $t$. This simple graph can then be expanded through series and parallel composition. For a series composition of two series-parallel graphs $G$ and $G'$ the sink $t$ of $G$ is contracted with the source $s'$ of $G'$ to obtain a new graph with source $s$ an sink $t'$. For a parallel composition the sources $s$ and $s'$ are contracted to obtain a new source $s$ as well as the sinks $t$ and $t'$ are contracted to a new sink $t$.
\end{definition}  
\begin{figure}
    \centering
    \includegraphics[width=0.7\textwidth]{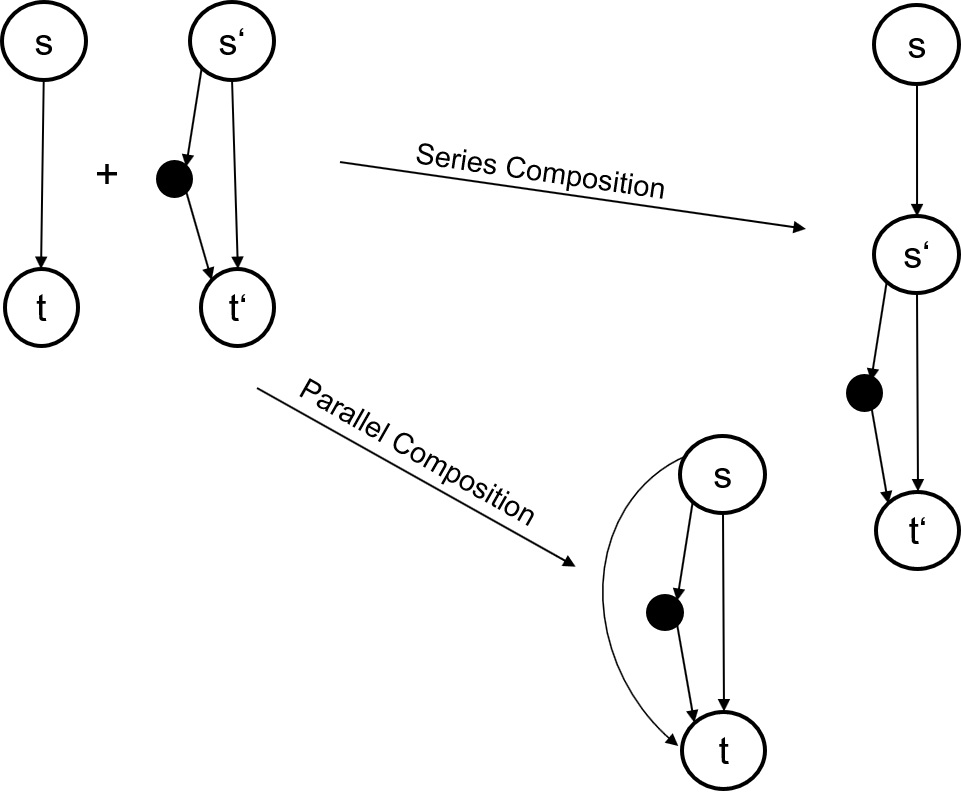}
    \caption{Series and parallel composition for a series-parallel graph}
    \label{fig:SPGraph}
\end{figure}

\begin{theorem}
The network design problem for railway infrastructure under timetable constraints is pseudo-polynomial solvable in 
\begin{equation}
 O(\vert V\vert \cdot max_v(tt_v)\cdot (1\cdot m+ max_v(tt_v)\cdot n+\vert V\vert \cdot  m)
\end{equation}
for problems with a uniform given departure time.
\end{theorem}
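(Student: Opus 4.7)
The plan is to run dynamic programming on the series–parallel decomposition tree of the graph. Using standard linear-time SP recognition, I first build this tree, whose leaves are single arcs and whose internal nodes each represent either a series or a parallel composition of two SP-subgraphs with terminals $(s_H,t_H)$. Write $K := |V|$ and $T := \max_v tt_v$. For each tree node $H$, I maintain a table $f_H(k,\tau)$ giving the minimum expansion cost on arcs of $H$ such that $k$ trains can be routed through $H$ from $s_H$ to $t_H$ each using travel time at most $\tau$; this table has $O(KT)$ entries, indexed by $k\in\{0,\ldots,K\}$ and $\tau\in\{0,\ldots,T\}$.

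At a leaf arc $(i,j)$, $f_H(k,\tau)$ is $0$, $k_{ij}$, or $+\infty$ depending on whether $\tau \geq tt_{ij}$ and how $k$ compares with $c_{ij}$ and $c_{ij}+\widetilde{c_{ij}}$ ($O(1)$ per entry). At a series composition $H = H_1 \circ H_2$,
\begin{equation*}
f_H(k,\tau) \;=\; \min_{\tau_1+\tau_2\le\tau}\bigl(f_{H_1}(k,\tau_1)+f_{H_2}(k,\tau_2)\bigr),
\end{equation*}
costing $O(T)$ per entry; at a parallel composition $H = H_1 \parallel H_2$,
\begin{equation*}
f_H(k,\tau) \;=\; \min_{k_1+k_2=k}\bigl(f_{H_1}(k_1,\tau)+f_{H_2}(k_2,\tau)\bigr),
\end{equation*}
costing $O(K)$ per entry. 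After filling the tables bottom-up, the overall optimum is extracted at the root by sorting the trains by their travel-time budgets $\tau_{(1)}\le\cdots\le\tau_{(K)}$ and returning $\max_i f_{\text{root}}\bigl(i,\tau_{(i)}\bigr)$: this captures the deadline-feasibility condition that at least $i$ trains admit paths of length at most $\tau_{(i)}$ for every $i$.

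For the complexity, the SP tree has $O(m)$ leaves, at most $O(n)$ series internal nodes (each series composition contracts two vertices of $G$ into one, so their number is bounded by the non-terminal vertices of $G$) and at most $O(m)$ parallel internal nodes. Summing leaf work $O(KTm)$, series work $O(KT\cdot T\cdot n)=O(KT^2 n)$, and parallel work $O(KT\cdot K\cdot m)=O(K^2 T m)$ gives $O\!\left(KT(m+Tn+Km)\right)$, matching the stated bound.

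The main obstacle I expect is proving correctness of the top-level extraction, i.e.\ that the minimum expansion cost equals $\max_i f_{\text{root}}(i,\tau_{(i)})$. The direction "$\ge$" follows immediately from feasibility. For "$\le$" one must exhibit a single expansion of that cost that simultaneously realises all individual deadlines; this rests on a nested-rearrangement argument for SP graphs, showing that the DP optima for successive pairs $(i,\tau_{(i)})$ can be realised by a common expansion by assigning the $i$-th tightest-deadline train to the $i$-th shortest available path and then uncrossing paths across each composition. Handling trains whose source or sink differs from the global terminals---by attributing each such train to the minimal SP-subtree spanning $d_v$ and $a_v$ and refining the DP so that counts of trains entering and leaving at interior nodes are tracked---is the second place where the proof needs care, though standard tree-DP bookkeeping shows it does not alter the asymptotic complexity.
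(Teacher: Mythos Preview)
Your dynamic-programming recurrences for leaves, series compositions, and parallel compositions, as well as the complexity accounting, match the paper exactly. The paper, however, works in a more restricted setting than you attempt: all trains share not only the departure time but implicitly also the origin $s$, the destination $t$, and a common travel-time budget, so the final answer is simply $f_{\text{root}}(|V|,\max_v tt_v)$ --- no non-trivial extraction step is needed.

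Your proposed extraction $\max_i f_{\text{root}}(i,\tau_{(i)})$ for heterogeneous deadlines is incorrect. Take $G=P_1\parallel P_2$ where $P_1$ is two unit-time arcs in series, each with capacity $0$ expandable to $1$ at cost $5$, and $P_2$ is a single arc of travel time $3$, capacity $0$ expandable to $2$ at cost $3$. With two trains of budgets $\tau_{(1)}=2$ and $\tau_{(2)}=3$ one gets $f_{\text{root}}(1,2)=10$ (only $P_1$ is short enough) and $f_{\text{root}}(2,3)=3$ (put both on $P_2$), so your formula returns $10$; but the tight-deadline train is forced onto $P_1$, the second train cannot share it (bottleneck capacity $1$) and must use $P_2$, and the true optimum is $13$. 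Your lower-bound direction is fine; it is the ``nested-rearrangement'' upper bound that fails, because the expansions realising $(1,\tau_{(1)})$ and $(2,\tau_{(2)})$ need not be comparable by inclusion. Your second extension --- trains with interior endpoints --- is likewise outside the paper's scope and not well-posed as stated, since two arbitrary vertices of a series-parallel graph are in general not the terminals of any node of the decomposition tree. For the theorem as written, restrict to identical $s$--$t$ trains with a common budget and read off $f_{\text{root}}(|V|,\max_v tt_v)$.
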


\begin{proof}
By $K_G(v,tt)$ we denote the minimal expansion cost on a series-parallel graph for a number of $v$ vehicles and a maximum travel time for all vehicle of $tt$ time steps.

We start with the case that the network $G=(N,A)$ only consists of one edge. 
If the extendable capacity is smaller than the number of trains that want to pass the edge, there exists no feasible solution and the minimum costs are infinite. If the capacity without any extension is already greater or equal than the number of trains, no extensions are necessary, so no costs occur. Finally, if the extended capacity is greater or equal to the number of trains while the existing capacity is smaller, the minimal costs are exactly the extension costs for this edge.

Therefore, we obtain for the expansion costs for the edge $ij$
\begin{equation}
  K_G(v, tt) = \begin{cases}
0 &\text{if } v\leq c_{ij}\\
k_{ij} &\text{if }c_{ij}<v\leq c_{ij}+\widetilde{c}_{ij}\\
\infty &\text{if }v\geq c_{ij}+\widetilde{c}_{ij}
\end{cases}
\end{equation}

\paragraph{Series Composition}
For a series composition $G=G_1\circ G_2$ we show that 
\begin{equation}
K_G(v,tt)=\min_{tt_1\in (0,tt)} K_{G_1}(v,tt_1)+ K_{G_2}(v,tt-tt_1).
\end{equation}

The minimal expansion costs for graph $G$ denoted by $K_G(v,tt)$ are clearly smaller or equal to the sum of the costs $K_{G_1}(v,tt_1)+ K_{G_2}(v,tt-tt_1)$ for al $tt_1\in (0,tt)$ as this is a specific division of $G$ into subgraphs for which the expansion scheme is minimal. This is not necessarily the minimal expansion scheme for $G$.

Let us assume there exists a flow $x$ such that 
\begin{equation}
K_G(v,tt)<\min_{ tt_1\in (0,tt)} K_{G_1}(v,tt_1)+ K_{G_2}(v,tt-tt_1).
\end{equation}
Let $x_1$ denote the flow of $G$ restricted on $G_1$ and $x_2$ the flow restricted on $G_2$. Then either $x_1$ has to have expansion costs lower than $K_{G_1}(v,tt_1)$ or $x_2$ has to have expansion costs lower than $K_{G_2}(v,tt-tt_1)$ for some $tt_1\in (0,tt)$. This contradicts the definition of $K_{G_1}$ as minimal, as the flow $x$ has to pass $G_1$ completely and then pass $G_2$ completely by the definition of a series-parallel graph.

\paragraph{Parallel Composition}
Let $G=G_1\vert G_2$ be the parallel composition of two series-parallel graphs $G_1$ and $G_2$. 
Then we obtain for the expansion cost of $G$\begin{equation}
    K_G(v,tt)=\min_{v_1\leq v} K_{G_1}(v_1,tt)+ K_{G_2}(v-v_1,tt)
\end{equation}
This holds, since we look for the optimal partition of the vehicles onto $G_1$ and $G_2$ with respect to the expansion costs.

For each series-parallel graph, a tree can be constructed whose nodes refer to the series and parallel compositions and whose leaves represent each single edge.
In \cite{valdes1982recognition} Valdes et al. show that such a tree for $G$ can be computed in $O(m+n)$ where $m$ is the number of arcs and $n$ the number of vertices in $G$. This tree has $O(m)$ inner nodes for parallel composition and $O(n)$ inner nodes for series composition as well as $m$ leaves which correspond to single arcs. We compute $K_G(v,tt)$ for each node in the composition tree for $tt\leq \max_v(tt_v)$ and $v\leq \vert V\vert$. If the values are computed from the leaves up, it can be assumed that the values for all subgraphs are known. Therefore, the value $K_G$ can be computed in $O(1)$ for a single edge, in $O(max_v(tt_v))$ for a series composition as we minimize over the travelling times and in $O(V)$ for a parallel composition as we minimize over the partition of the vehicles onto the parallel graphs. Then we obtain the minimal expansion cost with $K_G(\vert V\vert,  max_v(tt_v))$. Therefore, we obtain a running time of
\begin{equation}
 O(\vert V\vert \cdot max_v(tt_v)\cdot (1\cdot m+ max_v(tt_v)\cdot n+\vert V\vert \cdot  m) .
\end{equation}
\end{proof}

\section{Conclusion and Outlook}\label{conclusion}
In this article, we analysed the network design problem under timetable constraints.
We have shown that the railway network design problem under timetable constraints is NP-hard to solve on bipartite graphs and  have shown some easier solvable cases. Furthermore, we provided a formalization of the macroscopic railway network design problem under timetable constraints for both the deterministic and the robust problem formulation using time expanded networks. The robust formulation can be used as a long-term timetable. This still allows some flexibility to adapt the timetable to changing traffic demand or political conditions during the long planning process of railway infrastructure. Based on this timetable, the infrastructure planning can be conducted more demand-oriented and economical.

Our next research goal is to find heuristics and approximations to reduce the running time of the implementation of this NP-hard optimization problem.
\paragraph{Acknowledgements}
This work is funded by the Deutsche Forschungsgemeinschaft
(DFG, German Research Foundation) – 2236/1 and NI 1597/4-1.

%

\bibliographystyle{plain} 
\bibliography{bib}


\end{document}